\newtheorem{proposition}{Proposition}[section]
\title{On conditions for long-wave equivalent medium to be isotropic and on analysis of parameters indicating anisotropy of equivalent TI medium}
\author{Filip P. Adamus\footnote{
Department of Earth Sciences, Memorial University of Newfoundland, Canada, {\tt adamusfp@gmail.com}}}
\date{}
\begin{document}
\maketitle
%%%%%%%%%%%%%%%%%%%%%%%%%%%%%%%%%%%%%%%%%%%%%%%%%%%%%%%%%%%%%%%%%%%%%%%%%%%%%%%%%%%%
\begin{abstract}
In this paper, we consider a long-wave equivalent medium to a finely parallel-layered inhomogeneous medium, obtained using the \citet{Backus} average.
Following the work of~\citet{Postma} and~\citet{Backus}, we show explicitly the derivations of the conditions to obtain the equivalent isotropic medium. 
We demonstrate that there cannot exist a transversely isotropic (TI) equivalent medium with the coefficients $c^{\overline{\rm TI}}_{1212} \neq c^{\overline{\rm TI}}_{2323}$, $c^{\overline{\rm TI}}_{1111} = c^{\overline{\rm TI}}_{3333}$ and $c^{\overline{\rm TI}}_{1122} = c^{\overline{\rm TI}}_{1133}$. 
Moreover, we consider a new parameter, $\varphi$, indicating the anisotropy of the equivalent medium, and we show its range and properties.
Subsequently, we compare $\varphi$ to the~\citet{Thomsen} parameters, emphasizing its usefulness as a supportive parameter showing the anisotropy of the equivalent medium or as an alternative parameter to $\delta$.
We argue with certain~\citet{Berrymanetal} considerations regarding the properties of the anisotropy parameters $\epsilon$ and $\delta$.
Additionally, we show an alternative way---to the one mentioned by~\citet{Berrymanetal}---of indicating changing fluid content in layered Earth.
\end{abstract}
%%%%%%%%%%%%%%%%%%%%%%%%%%%%%%%%%%%%%%%%
%%%%%%%%%%%%%%%%%%%%%%%%%%%%%%%%%%%%%%%%
\section{Introduction}
%%%%%%%%%%%%%%%%%%%%%%%%%%%%%%%%%%%%%%%%
%%%%%%%%%%%%%%%%%%%%%%%%%%%%%%%%%%%%%%%%
The problem of fine, parallel layering and its long-wave equivalent medium approximation has been treated by a number of authors, among many of them are: \citet{Postma}, \citet{Backus}, \citet{HelbigandSchoenberg}, \citet{SchoenbergandMuir}, \citet{Berrymanetal}, and \citet{Bosetal}. 
In the work of~\citeauthor{Postma}, the periodic, isotropic, two-layered medium (PITL) was considered; the formula for the equivalent homogeneous transversely isotropic (TI) medium was shown.
Also, the isotropic condition for such a medium was found, namely, Lam\'e parameter $\mu=\rm{const}$. 
In this paper, we persue~\citeauthor{Postma}'s consideration, adding some new insights and expressing it in a modern notation.

In~\citeauthor{Backus}'s work, a finely-layered, horizontally-stratified, TI medium with no periodicity assumption was considered.
An elegant formula of the TI medium, long-wave equivalent to isotropic or TI layers of different thicknesses, was shown.
Also, similarly to \citeauthor{Postma}, the author has implicitly derived that the constant $\mu$ is responsible for the isotropy of the equivalent medium, however with no assumption of periodicity.
We show explicitly the aforementioned derivation, using modern notation.

Finally, inspired by the work of \citeauthor{Berrymanetal}, we revise the consideration about the \citeauthor{Thomsen} parameter $\epsilon$ and its dependance on $\mu$. 
Also, we analyze the behaviour of $\delta$ in the context of variations of Lam\'e parameter $\lambda$.
Moreover, we examine an alternative way---to the one shown by \citeauthor{Berrymanetal}---of indicating the change of fluid content in layered Earth.
To do so, we consider another parameter indicating the anisotropy of the equivalent medium, find its range and compare its properties to the all three \citeauthor{Thomsen} parameters.
For simplicity, throughout this paper, we denote long-wave equivalent TI or isotropic medium---received using the \citeauthor{Backus} average---as the~\citeauthor{Backus} medium or the equivalent medium.
%%%%%%%%%%%%%%%%%%%%%%%%%%%%%%%%%%%%%%%%
\section{Background}
%%%%%%%%%%%%%%%%%%%%%%%%%%%%%%%%%%%%%%%%
\subsection{Backus average}
%%%%%%%%%%%%%%%%%%%%%%%%%%%%%%%%%%%%%%%%
As shown by~\citeauthor{Backus}, a medium composed of parallel isotropic layers, whose individual thicknesses are much shorter than the wavelength, respond---to the wave propagation---as a single, homogeneous and transversely isotropic medium. 
The elasticity parameters of this medium are 
\begin{equation}   \label{bac1}
    c_{1111}^{\overline{\rm TI}}=\overline{\left(\frac{\lambda}{\lambda+2\mu}\right)}^2\,\overline{\left(\frac{1}{\lambda+2\mu}\right)}^{-1}+\overline{\left(\frac{4(\lambda+\mu)\mu}{\lambda+2\mu}\right)}\,,
\end{equation}
\begin{equation}   \label{bac2}
    c_{1122}^{\overline{\rm TI}}=\overline{\left(\frac{\lambda}{\lambda+2\mu}\right)}^2\,\overline{\left(\frac{1}{\lambda+2\mu}\right)}^{-1}+\overline{\left(\frac{2\lambda\mu}{\lambda+2\mu}\right)}\,,
\end{equation}
\begin{equation}  \label{bac3}
    c_{1133}^{\overline{\rm TI}}=\overline{\left(\frac{\lambda}{\lambda+2\mu}\right)}\,\overline{\left(\frac{1}{\lambda+2\mu}\right)}^{-1}\,,
\end{equation}
\begin{equation}  \label{bac4}
     c_{3333}^{\overline{\rm TI}}=\overline{\left(\frac{1}{\lambda+2\mu}\right)}^{-1}\,,
 \end{equation}
 \begin{equation}  \label{bac5}
     c_{2323}^{\overline{\rm TI}}=\overline{\left(\frac{1}{\mu}\right)}^{-1}\,,
 \end{equation}
  \begin{equation}  \label{bac6}
   c_{1212}^{\overline{\rm TI}}=\overline{\mu}\,,
 \end{equation}
where $\lambda$ and $\mu$ are the Lam\'e parameters for each layer and the bar denotes the weighted average.
The weight is the layer thickness; herein, since all layers have the same thickness, we have the arithmetic average.
A TI medium, whose rotation symmetry axis is parallel to the $x_3$-axis, is~\citep[see e.g.,][p.134]{SlawinskiGreen}
\begin{equation*}
C^{\overline{\rm TI}}=
\left[
\begin{array}{cccccc}
c^{\overline{\rm TI}}_{1111} & c^{\overline{\rm TI}}_{1122} & c^{\overline{\rm TI}}_{1133} & 0 & 0 & 0\\
c^{\overline{\rm TI}}_{1122} & c^{\overline{\rm TI}}_{1111} & c^{\overline{\rm TI}}_{1133} & 0 & 0 & 0\\
c^{\overline{\rm TI}}_{1133} & c^{\overline{\rm TI}}_{1133} & c^{\overline{\rm TI}}_{3333} & 0 & 0 & 0\\
0 & 0 & 0 & 2c^{\overline{\rm TI}}_{2323} & 0 & 0\\
0 & 0 & 0 & 0 & 2c^{\overline{\rm TI}}_{2323} & 0\\
0 & 0 & 0 & 0 & 0 & 2c^{\overline{\rm TI}}_{1212}
\end{array}
\right]
\,,
\end{equation*}
where $c_{1122}^{\overline{\rm TI}}=c_{1111}^{\overline{\rm TI}}-2c_{1212}^{\overline{\rm TI}}$, which means that expressions~(\ref{bac1})--(\ref{bac6}) consist of five independent parameters.

%%%%%%%%%%%%%%%%%%%%%%%%%%%%%%%%%%%%%%%%
\subsection{Thomsen parameters}
%%%%%%%%%%%%%%%%%%%%%%%%%%%%%%%%%%%%%%%%
To examine anisotropy strength of transversely isotropic tensor, we use~\citeauthor{Thomsen} parameters,
 \begin{equation*}\label{eq:thom1}
     \gamma := \frac{c_{1212}^{\overline{\rm TI}}-c_{2323}^{\overline{\rm TI}}}{2c_{2323}^{\overline{\rm TI}}}\,,
 \end{equation*}
\begin{equation*}\label{eq:thom2}
     \delta := \frac{\left(c_{1133}^{\overline{\rm TI}}+c_{2323}^{\overline{\rm TI}}\right)^2-\left(c_{3333}^{\overline{\rm TI}}-c_{2323}^{\overline{\rm TI}}\right)^2}{2c_{3333}^{\overline{\rm TI}}\left(c_{3333}^{\overline{\rm TI}}-c_{2323}^{\overline{\rm TI}}\right)}\,,
 \end{equation*}
\begin{equation*}\label{eq:thom3}
     \epsilon := \frac{c_{1111}^{\overline{\rm TI}}-c_{3333}^{\overline{\rm TI}}}{2c_{3333}^{\overline{\rm TI}}}\,.
 \end{equation*}
 As shown by~\citet{Adamusetal}---for the \citeauthor{Backus} average---growing anisotropy of an equivalent medium implies the growth of inhomogeneity among layers. 
 %%%%%%%%%%%%%%%%%%%%%%%%%%%%%%%%%%%%%%%%
 \subsection{Stability conditions}
 %%%%%%%%%%%%%%%%%%%%%%%%%%%%%%%%%%%%%%%%
The allowable relations among the elasticity parameters are stated by the stability conditions that express the fact that it is necessary to expend energy to deform a material~\citep[e.g.,][Section 4.3]{SlawinskiRed}. 
These conditions mean that every elasticity tensor must be positive-definite, wherein a tensor is positive-definite if and only if all its eigenvalues are positive.
For any isotropic elasticity tensor, the inequalities
\begin{equation*}
\label{ineq}
    c_{1111}\geq\tfrac{4}{3}\,c_{2323}\geq0
\end{equation*}
or, in a different notation, using Lam\'e parameters,
\begin{equation}
\label{eq:stab0}
    \lambda\geq-\tfrac{2}{3}\,\mu  
\qquad    
\mathrm{and}
\qquad
\mu\geq0
\end{equation}
ensure that all eigenvalues are positive, thus the stability conditions are satisfied.
Any transversely isotropic elasticity tensor, to satisfy the stability conditions, must obey the inequalities
\begin{equation}\label{eq:stab}
c_{1212}^{\overline{\rm TI}}\geq0\,,
\qquad
c_{2323}^{\overline{\rm TI}}\geq0\,,
\qquad
c_{3333}^{\overline{\rm TI}}\geq0\,,
\qquad
c_{1122}^{\overline{\rm TI}}+c_{1212}^{\overline{\rm TI}}\geq0\,,
\end{equation}
\begin{equation*}
\left(c_{1122}^{\overline{\rm TI}}+c_{1212}^{\overline{\rm TI}}\right)c_{3333}^{\overline{\rm TI}}\geq\left(c_{1133}^{\overline{\rm TI}}\right)^2\,.
\end{equation*}
%%%%%%%%%%%%%%%%%%%%%%%%%%%%%%%%%%%%%%%
\section{Conditions for Backus medium to be isotropic} 
%%%%%%%%%%%%%%%%%%%%%%%%%%%%%%%%%%%%%%%
An equivalent medium to be isotropic, must satisfy
\begin{equation}\label{eq:cond1}
c^{\overline{\rm TI}}_{1212} = c^{\overline{\rm TI}}_{2323}\,,
\end{equation}
\begin{equation}\label{eq:cond2}
c^{\overline{\rm TI}}_{1111} = c^{\overline{\rm TI}}_{3333}\,
\end{equation}
and 
\begin{equation}\label{eq:cond3}
c^{\overline{\rm TI}}_{1133} = c^{\overline{\rm TI}}_{1111} - 2\,c^{\overline{\rm TI}}_{1212}\,. 
\end{equation}
The last relation comes from $c^{\overline{\rm TI}}_{1122} = c^{\overline{\rm TI}}_{1133}$ and from the constraint  $c^{\overline{\rm TI}}_{1111} = c^{\overline{\rm TI}}_{1122} + 2\,c^{\overline{\rm TI}}_{1212}$\,.
%%%%%%%%%%%%%%%%%%%%%%%%%%%%%%%%%%%%%%%%
\subsection{Conditions for isotropy: PITL medium} 
\label{sec:pitl}
%%%%%%%%%%%%%%%%%%%%%%%%%%%%%%%%%%%%%%%%
Condition~(\ref{eq:cond1}) is satisfied if the stiffness factor, $\mu$, is constant. 
To discuss the other conditions, let us consider a PITL medium.
Such a medium is a periodic structure consisting of alternating plane, parallel, isotropic and homogeneous elastic layers, with constant thickness and density.
A PITL medium has only four Lam\'e parameters, namely, $\lambda_1$, $\lambda_2$, $\mu_1$ and $\mu_2$, which simplifies the averaging process.
After laborious computations, we see that condition~(\ref{eq:cond2}) is obeyed if 
\begin{equation*}
    \left(\mu_2-\mu_1\right)\left(\mu_2-\mu_1+\lambda_2-\lambda_1\right)=0\,
\end{equation*} and
it is true for $\mu_1=\mu_2$ or $\lambda_1+\mu_1=\lambda_2+\mu_2$.

Condition~(\ref{eq:cond3}) is satisfied if
\begin{equation}\label{eq:b=f}
    \lambda_1\mu_1+\lambda_2\mu_2=\lambda_1\mu_2+\lambda_2\mu_1\,,
\end{equation}
hence it is true for $\mu_1=\mu_2$ or $\lambda_1=\lambda_2$.

We conclude that for a PITL medium, $\mu$ being constant is a unique and necessary condition for equivalent medium to be isotropic (in a different manner it was also shown by \citeauthor{Postma}).
If condition~(\ref{eq:cond2}) is satisfied by $\lambda_1+\mu_1=\lambda_2+\mu_2$, then in order to simultaneously satisfy condition~(\ref{eq:cond3}), we transform equation~(\ref{eq:b=f}) to 
\begin{equation*}
    \left(\lambda_2+\mu_2-\mu_1\right)\mu_1+\lambda_2\mu_2=\left(\lambda_2+\mu_2-\mu_1\right)\mu_2+\lambda_2\mu_1
\end{equation*}
and, after simple computation, we receive
\begin{equation*}
    \left(\mu_1-\mu_2\right)^2=0\,,
\end{equation*}
which means that $\mu$ must be constant.
Hence, if both conditions~(\ref{eq:cond2}) and (\ref{eq:cond3}) are obeyed, then it automatically follows that condition~(\ref{eq:cond1}) is also satisfied and a medium is isotropic.
In other words, there is no TI equivalent medium, in which
\begin{equation*} 
c^{\overline{\rm TI}}_{1212} \neq c^{\overline{\rm TI}}_{2323}\,,\qquad c^{\overline{\rm TI}}_{1111} = c^{\overline{\rm TI}}_{3333}\qquad \mathrm{and} \qquad c^{\overline{\rm TI}}_{1133} = c^{\overline{\rm TI}}_{1111} - 2\,c^{\overline{\rm TI}}_{1212}\,.
\end{equation*}
Also, it can be easily deduced that a TI equivalent medium, in which 
\begin{equation*}
c^{\overline{\rm TI}}_{1212} \neq c^{\overline{\rm TI}}_{2323}\,,\qquad c^{\overline{\rm TI}}_{1111} = c^{\overline{\rm TI}}_{3333}\qquad \mathrm{and} \qquad c^{\overline{\rm TI}}_{1133} \neq c^{\overline{\rm TI}}_{1111} - 2\,c^{\overline{\rm TI}}_{1212}
\end{equation*}
may exist.
Moreover, if $\lambda_1=\lambda_2$, we deduce that there may exist a TI \citeauthor{Backus} medium with 
\begin{equation*}
c^{\overline{\rm TI}}_{1212} \neq c^{\overline{\rm TI}}_{2323}\,,\qquad c^{\overline{\rm TI}}_{1111} \neq c^{\overline{\rm TI}}_{3333}\qquad \mathrm{and} \qquad c^{\overline{\rm TI}}_{1133} = c^{\overline{\rm TI}}_{1111} - 2\,c^{\overline{\rm TI}}_{1212}\,.
\end{equation*}
%%%%%%%%%%%%%%%%%%%%%%%%%%%%%%%%%%%%%%%%
\subsection{Conditions for isotropy: finely layered medium}
\label{sec:CondIsoFLM}
%%%%%%%%%%%%%%%%%%%%%%%%%%%%%%%%%%%%%%%%
Let us consider the general case of thin parallel layers without any assumption of periodicity.
Invoking the work of \citeauthor{Backus}, we can rewrite equations~(\ref{bac1})--(\ref{bac6}) as
\begin{equation}\label{eq:ABCF1}
c_{1111}^{\overline {\rm TI}}=:A=B+2M\,,
\end{equation}
\begin{equation}
c_{1122}^{\overline {\rm TI}}=:B=2M-4S+R^{-1}\left(1-2T\right)^2\,,
\end{equation}
\begin{equation}
c_{1133}^{\overline {\rm TI}}=:F=R^{-1}\left(1-2T\right)\,,
\end{equation}
\begin{equation}
c_{3333}^{\overline {\rm TI}}=:C=R^{-1}\,,
\end{equation}
\begin{equation}
c_{2323}^{\overline {\rm TI}}=:L\,,
\end{equation}
\begin{equation}\label{eq:ABCF2}
c_{1212}^{\overline {\rm TI}}=:M\,,
\end{equation}
where
\begin{equation*}
R=\overline{\left(\frac{\mu}{\lambda+2\mu}\right)\left(\frac{1}{\mu}\right)}\,,\qquad
S=\overline{\left(\frac{\mu}{\lambda+2\mu}\right)\mu}\,,\qquad
T=\overline{\left(\frac{\mu}{\lambda+2\mu}\right)}\,.
\end{equation*}
Subsequently, \citeauthor{Backus} has expressed the conditions for the equivalent medium to be isotropic, namely, 
\begin{equation}\label{eq:baccond}
M=L\,,\qquad S=MT \qquad \mathrm{and} \qquad T=MR\,,
\end{equation}
showing that the aforementioned conditions are satisfied if and only if $\mu$ is constant.
Let us now prove that the conditions for isotropy~(\ref{eq:baccond}) satisfy equations~(\ref{eq:cond1})--(\ref{eq:cond3}), hence that \citeauthor{Backus}'s consideration is correct.
\newline
\begin{proposition}\label{prop}
Conditions $M=L$, $S=MT$ and $T=MR$ are satisfied if and only if $c^{\overline{\rm TI}}_{1212} = c^{\overline{\rm TI}}_{2323}$\,, $c^{\overline{\rm TI}}_{1111} = c^{\overline{\rm TI}}_{3333}$\, and \,$c^{\overline{\rm TI}}_{1133} = c^{\overline{\rm TI}}_{1111} - 2\,c^{\overline{\rm TI}}_{1212}$\,.
\end{proposition}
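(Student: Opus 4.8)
The plan is to work entirely with the abbreviations $A,B,C,F,L,M$ of~(\ref{eq:ABCF1})--(\ref{eq:ABCF2}) together with the auxiliary quantities $R,S,T$, so that the statement reduces to elementary algebra; no property of the thickness-weighted averages is needed beyond $R^{-1}=C=c^{\overline{\rm TI}}_{3333}>0$, which follows from the stability conditions and lets us divide by $R$ and $R^{-1}$ freely.

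First I would restate the three target relations in this notation. The condition $c^{\overline{\rm TI}}_{1212}=c^{\overline{\rm TI}}_{2323}$ is simply $M=L$. The condition $c^{\overline{\rm TI}}_{1111}=c^{\overline{\rm TI}}_{3333}$ is $A=C$, and since $A=B+2M$ this is equivalent to $B=C-2M=R^{-1}-2M$. The condition $c^{\overline{\rm TI}}_{1133}=c^{\overline{\rm TI}}_{1111}-2\,c^{\overline{\rm TI}}_{1212}$ is $F=A-2M$, and using $A=B+2M$ once more this is equivalent to $F=B$. Hence the three coordinate-free isotropy conditions are exactly $M=L$, $B=R^{-1}-2M$, and $F=B$, and it is these three that I would match against $M=L$, $S=MT$, $T=MR$.

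For the direction ($\Leftarrow$), assume $M=L$, $S=MT$, $T=MR$, and note that the last is equivalent to $R^{-1}T=M$. Substituting $S=MT$ and expanding gives $B=2M-4MT+R^{-1}(1-2T)^2=2M-4MT+R^{-1}-4R^{-1}T+4R^{-1}T^2$; using $R^{-1}T=M$, hence $4R^{-1}T^2=4(R^{-1}T)T=4MT$, the terms collapse to $B=R^{-1}-2M$, which is the condition $A=C$. Likewise $F=R^{-1}(1-2T)=R^{-1}-2R^{-1}T=R^{-1}-2M=B$, which is the condition $F=B$; and $M=L$ is assumed. For ($\Rightarrow$), assume $M=L$, $B=R^{-1}-2M$, $F=B$. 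From $F=B$ and $F=R^{-1}(1-2T)$ we get $R^{-1}(1-2T)=R^{-1}-2M$, hence $R^{-1}T=M$, i.e.\ $T=MR$. Feeding $T=MR$ and $B=R^{-1}-2M$ back into $B=2M-4S+R^{-1}(1-2T)^2$ and again using $R^{-1}T=M$ to simplify the squared term leaves $0=-4S+4MT$, i.e.\ $S=MT$; and $M=L$ holds by hypothesis.

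I do not anticipate a genuine obstacle: the only points needing care are keeping the identity $T=MR\iff R^{-1}T=M$ explicitly in play so that the quadratic-in-$T$ term simplifies cleanly, and justifying the division by $R^{-1}$, which is legitimate because $R^{-1}=c^{\overline{\rm TI}}_{3333}>0$ by stability. The further fact that these conditions hold precisely when $\mu$ is constant is \citeauthor{Backus}'s and is taken as given; the Proposition only certifies that \citeauthor{Backus}'s algebraic conditions~(\ref{eq:baccond}) are equivalent to the isotropy conditions~(\ref{eq:cond1})--(\ref{eq:cond3}).
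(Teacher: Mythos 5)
Your proof is correct and follows essentially the same route as the paper: both reduce the claim to elementary algebra on the quantities $A,B,C,F,L,M,R,S,T$ and hinge on $C=R^{-1}>0$ from the strict stability conditions. The only difference is organizational---you verify the two implications separately by substituting $S=MT$ and $R^{-1}T=M$ directly into the formulas for $B$ and $F$, whereas the paper eliminates $R,S,T$ in favour of $B,C,F,M$ and collapses the system to $C(F-B)=0$, $2M=C-F$; the content is the same.
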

\begin{proof}

$M=L$ is explicitly equation~(\ref{eq:cond1}).
Using equations~(\ref{eq:ABCF1})--(\ref{eq:ABCF2}), after some algebraic rearrangements, both $S=MT$ and $T=MR$ can be rewritten as
\begin{equation*}
\begin{cases}
(4C)^{-1}(F^2+2MC-BC)=M(2C)^{-1}(C-F)\\
(2C)^{-1}(C-F)=MC^{-1}
\end{cases}\,,
\end{equation*}
then we have
\begin{equation}\label{eq:proof}
\begin{cases}
F^2+2MF-BC=0\\
2M=C-F
\end{cases}
\end{equation}
and after substitution, we receive
\begin{equation*}
F^2+(C-F)F-BC=0\,,
\end{equation*}
which is
\begin{equation*}
C(F-B)=0\,.
\end{equation*}
Considering strict stability conditions, $C>0$, as opposed to weak ones from equation~(\ref{eq:stab}), we see that $B=F$, which is equation~(\ref{eq:cond3}).
After inserting $B=F$ into equation~(\ref{eq:proof}), we receive $C=B+2M$. 
Knowing that, for TI medium there is a constraint $A=B+2M$, we receive another relation $A=C$, which is equation~(\ref{eq:cond2}).
\end{proof}
Hence, assuming strict stability conditions, conditions~(\ref{eq:baccond}) are the conditions for isotropy, and it is proved that $\mu=\rm{const}$ is a necessary condition for any finely-layered medium to be isotropic.
From the considerations in Section~\ref{sec:pitl} and Proposition~\ref{prop}, we see that to verify if the \citeauthor{Backus} medium is isotropic, we check if both $S=MT$ and $T=MR$, or only if $M=L$.
%%%%%%%%%%%%%%%%%%%%%%%%%%%%%%%%%%%%%%%%
\section{Anisotropy parameters}
%%%%%%%%%%%%%%%%%%%%%%%%%%%%%%%%%%%%%%%%
\subsection{Anisotropy parameter $\varphi$}
\label{sec:VarPhi}
%%%%%%%%%%%%%%%%%%%%%%%%%%%%%%%%%%%%%%%%
Let us define anisotropy parameter
\begin{equation}\label{phi}
\varphi:=\frac{c_{1122}^{\overline{\rm TI}}-c_{1133}^{\overline{\rm TI}}}{2c_{1122}^{\overline{\rm TI}}}\,\,.
\end{equation}
This parameter is equal to zero in the case of $c_{1122}^{\overline {\rm TI}}=c_{1133}^{\overline {\rm TI}}$, which occurs---as it was shown in Section~\ref{sec:pitl}---if $ \lambda_1\mu_1+\lambda_2\mu_2=\lambda_1\mu_2+\lambda_2\mu_1$. 
Therefore, $\varphi=0$ for constant rigidity (then the Backus medium is isotropic) or for constant $\lambda$.
The advantage of introducing such a parameter is that it gives us an insight into the variability of $\lambda$.
Also, similarly to \citeauthor{Thomsen} parameters, it has zero value in the case of isotropy.
It may be treated as a replacement of \citeauthor{Thomsen} parameter $\delta$\,.
Similarly to $\delta$---and in contrast to $\epsilon$ and $\gamma$---it depends on $c_{1133}^{\overline{\rm TI}}$\,.
Furthermore, it facilitates an examination of certain properties of the medium, which are not emphasized by~$\delta$\,, even though, mathematically, adding another parameter is unnecessary, since they contain the information about the five independent TI elasticity parameters.
Physically, however, it can add additional subtle information---as it is discussed in Section~\ref{sec:fluids}---useful in the detection of fluid change in layered Earth.

We use form~\eqref{phi} to have a larger possibility of receiving a positive value instead of a negative one, since typically, $c_{1122}^{\overline{\rm TI}} > c_{1133}^{\overline{\rm TI}}$ and $c_{1122}^{\overline{\rm TI}} > 0$.
Also, for fluid detection, it is more convenient to use expression~(\ref{phi}) than
\begin{equation*}
\frac{c_{1133}^{\overline{\rm TI}}-c_{1122}^{\overline{\rm TI}}}{2c_{1122}^{\overline{\rm TI}}}\,.
\end{equation*}
The choice of the denominator is motivated by the fact that we want $\varphi$ to be unitless and to give the relative value.
Similarly to \citeauthor{Thomsen} parameters, we multiply the denominator by two.

Let us check the range of $\varphi$.
For the two layer case, we receive
\begin{equation}\label{eq:phitwolay}
\varphi=\frac{\lambda_1\mu_2+\lambda_2\mu_1-\lambda_1\mu_1-\lambda_2\mu_2}{4\left(\lambda_1\lambda_2+\lambda_1\mu_2+\lambda_2\mu_1\right)}=\frac{\left(\mu_1-\mu_2\right)\left(\lambda_2-\lambda_1\right)}{4\left(\lambda_1\lambda_2+\lambda_1\mu_2+\lambda_2\mu_1\right)}
\end{equation}
and, confirming the result from equation~(\ref{eq:b=f}), for $\lambda_1=\lambda_2$, we have $\varphi=0$.
We try to determine the range of $\varphi$, by checking the boundary values of $\lambda_1$, $\lambda_2$, $\mu_1$ and $\mu_2$.
According to condition~(\ref{eq:stab0}), the smallest value of $\lambda_2$ is $-\frac{2}{3}\mu_2$, hence, we receive
\begin{equation*}
\varphi=\frac{3\left(\mu_1-\mu_2\right)\left(-\lambda_1-\tfrac{2}{3}\mu_2\right)}{4\left(\lambda_1\mu_2-2\mu_2\mu_1\right)}\,,
\end{equation*}
where $\lambda_1$ may vary from $-\tfrac{2}{3}\mu_1$ to $\infty$.
For $\lambda_1=-\tfrac{2}{3}\mu_1$, we obtain
\begin{equation}\label{eq:range1}
\varphi=\frac{\left(\mu_1-\mu_2\right)\left(\tfrac{2}{3}\mu_1-\tfrac{2}{3}\mu_2\right)}{-\tfrac{32}{9}\mu_1\mu_2}=-\frac{3}{16}\frac{\left(\mu_1-\mu_2\right)^2}{\mu_1\mu_2}\,,
\end{equation}
which gives the range $-\infty\leq\varphi\leq -\frac{3}{16}$.
For $\lambda_1$ tending to $\infty$, we obtain the limit
\begin{equation}\label{eq:range2}
\varphi=\lim_{\lambda_1\rightarrow\infty}\frac{\left(\mu_1-\mu_2\right)\left(-\lambda_1-\tfrac{2}{3}\mu_2\right)}{\tfrac{4}{3}\left(\lambda_1\mu_2-2\mu_2\mu_1\right)}=\frac{3}{4}\left(\frac{\mu_2-\mu_1}{\mu_2}\right)
\end{equation}
and the range is $-\infty\leq\varphi\leq \frac{3}{4}$.
For $\lambda_1=-\frac{2}{3}\mu_1$ and $\lambda_2\rightarrow\infty$, we have the limit
\begin{equation}\label{eq:range3}
\varphi=\lim_{\lambda_2\rightarrow\infty}\frac{\left(\mu_1-\mu_2\right)\left(\lambda_2+\tfrac{2}{3}\mu_1\right)}{\tfrac{4}{3}\left(\lambda_2\mu_1-2\mu_2\mu_1\right)}=\frac{3}{4}\left(\frac{\mu_1-\mu_2}{\mu_1}\right)
\end{equation}
with the same range as above.
If we set $\lambda_1$, $\lambda_2\rightarrow\infty$, we obtain the limit equal to zero.
Hence, from above equations~(\ref{eq:range1})--(\ref{eq:range3}), we see that the range of $\varphi$---for all combinations of boundary values of $\lambda_1$, $\lambda_2$, $\mu_1$, $\mu_2$---is
\begin{equation*}
-\infty<\varphi<\frac{3}{4}\,\,.
\end{equation*}
However, using Monte Carlo method, hence, generating random values of $\lambda_1$, $\lambda_2$, $\mu_1$, $\mu_2$, and obeying the stability conditions, we notice that the range of $\varphi$ is $\in\mathbb{R}$.
Therefore, there exist values larger than $\frac{3}{4}$ for some non--boundary Lam\'e parameters.
The analytical proof is the following.
We set
\begin{equation*}
x:=\lambda_1+\mu_2
\qquad
\mathrm{and}
\qquad
y:=\lambda_2+\mu_1
\end{equation*}
and equation~(\ref{eq:phitwolay}) can be rewritten as
\begin{equation*}
\frac{\left(\mu_1-\mu_2\right)\left(\left(y-x\right)-\left(\mu_1-\mu_2\right)\right)}{4\left(xy-\mu_1\mu_2\right)}=\frac{\left(\mu_1-\mu_2\right)\left(\dfrac{\mu_1\mu_2}{x}-x-\left(\mu_1-\mu_2\right)\right)}{4\left(xy-\mu_1\mu_2\right)}\,.
\end{equation*}
In the case of $xy$ being infinitesimally larger than $\mu_1\mu_2$ and very small $x$ (or, similarly, $xy$ being infinitesimally smaller than $\mu_1\mu_2$ and large $x$), this expression tends to $\infty$.
In the case of $xy$ being infinitesimally larger than $\mu_1\mu_2$ and large $x$ (or, similarly, $xy$ being infinitesimally smaller than $\mu_1\mu_2$ and very small $x$), this expression tends to $-\infty$. Hence, the range of $\varphi$ is $\in\mathbb{R}$, as required.
%%%%%%%%%%%%%%%%%%%%%%%%%%%%%%%%%%%%%%%%%%
\subsection{Comparison of anisotropy parameters}
%%%%%%%%%%%%%%%%%%%%%%%%%%%%%%%%%%%%%%%%%%
Let us compare \citeauthor{Thomsen} parameters and $\varphi$, in a similar way as it was done by~\citeauthor{Berrymanetal}.
\begin{table}[h]
    \centering
    \begin{tabular}{ cccc}
\toprule
 anisotropy & $\lambda=\rm{const}$ & $\lambda+2\mu=\rm{const}$ & $v=\rm{const}$\\
 parameter&$\mu\neq \rm{const}$ &$\mu\neq \rm{const}$ & $\lambda,\mu\neq \rm{const}$ \\
 \cmidrule{1-4}
 $\epsilon$&$\geq0$&$\leq0$ &$\geq0$\\
 $\delta$&$\leq0$&$\leq0$ & $0$\\
 $\gamma$&$\geq0$ & $\geq0$&$\geq0$ \\
 $\varphi$&$0$&$\in \mathbb{R}$ & $\in \mathbb{R}$ \\
 \bottomrule
 \end{tabular}
    \caption{\small{Behaviour of anisotropy parameters as the layer material elasticity parameters vary.}}
    \label{tab:tab1}
\end{table}
In Table~\ref{tab:tab1}, the behaviour of anisotropy parameters is shown.
The situation  of $v=\rm{const}$ corresponds to $\lambda_1\mu_2=\lambda_2\mu_1$, where $v$ is a Poisson's ratio.
The explanation of the behaviour of $\epsilon$ and $\delta$ can be find in the work of \citeauthor{Berrymanetal}.
For instance, results for $\epsilon$ can be shown analytically, while for $\delta$, Monte Carlo method should be used.
The results for $\gamma$ are obtained trivially following Schwarz's inequality and the stability conditions.
The explanation of behaviour of $\varphi$ is shown below.
As it is mentioned in Section~\ref{sec:VarPhi}, for $\lambda={\rm const}$, we have $\varphi=0$.
For $\lambda+2\mu=\rm{const}$, equation~(\ref{eq:phitwolay}) can be rewritten as
\begin{equation*}
\varphi=\frac{2\left(\mu_1-\mu_2\right)^2}{4\left(\lambda_1\lambda_2+\lambda_1\mu_2+\lambda_2\mu_1\right)}=\frac{\left(\mu_1-\mu_2\right)^2}{\left(\lambda_1+2\mu_1\right)\left(\lambda_1+\lambda_2\right)}\,.
\end{equation*}
The numerator is always positive, $\lambda_1+2\mu_1$ is also always positive, but $\lambda_1+\lambda_2$ can be either positive or negative. 
Hence, $\varphi$ is either larger or smaller then zero.
That was also verified numerically, using Monte Carlo method.
In the last case, for constant Poisson's ratio, equation~(\ref{eq:phitwolay}) can be rewritten as
\begin{equation*}
\varphi=\frac{2\lambda_1\mu_2-\lambda_1\mu_1-\lambda_2\mu_2}{4\left(\lambda_1\lambda_2+2\lambda_1\mu_2\right)}
\end{equation*}
and it may be positive or negative, which, similarly to the previous case, may be verified easily using Monte Carlo method.
%%%%%%%%%%%%%%%%%%%%%%%%%%%%%%%%%%%%%%%%%%
\subsection{Detecting fluids using anisotropy parameters}\label{sec:fluids}
%%%%%%%%%%%%%%%%%%%%%%%%%%%%%%%%%%%%%%%%%%
As it was shown by \citet{Gassmann}, the effects of fluids in porous rocks influence only Lam\'e parameter $\lambda$, not rigidity $\mu$. 
Let us discuss each anisotropy parameter in the context of $\lambda$ and $\mu$ fluctuations in finely layered media.
\citeauthor{Berrymanetal} state that
\begin{quote}
Thomsen's parameter $\epsilon$ is smallest when the variation in the layer $\lambda$ Lam\'e parameter is large, independent of the variation in the shear modulus $\mu$. $[\,\dots\,]$\, 
Similarly, we find that $\delta$ is positive in finely layered media having large variations in $\lambda$.
\end{quote}
The authors conclude that 
\begin{quote}
the regions of small positive $\epsilon$ when occurring together with small positive $\delta$ may be useful indicators of rapid spatial changes in fluid content in the layers.
\end{quote}
\citeauthor{Thomsen} parameter $\gamma$ indicates the variation of $\mu$ in layers.
Its large value indicates significant fluctuation of shear modulus.
Moreover, it does not depend on $\lambda$.
On the other hand, $\varphi$ shows the variation of $\lambda$ in layers, but also depends on the fluctuation of $\mu$.
Combining the information provided by $\gamma$ and $\varphi$ may be useful in detection of changes in fluid content.
In the case of large $\gamma$, very small values of $\varphi$ indicate $\lambda\approx\rm{const}$, and conversely, large values indicate variations of $\lambda$ in layers.

To discuss the anisotropy parameters in the context of the fluctuation of $\lambda$, it might be useful to consider the numerical examples from Tables~\ref{tab:tab2}~and~\ref{tab:tab3}.
For simplification, we examine only five horizontal thin isotropic layers. 
In Table~\ref{tab:tab2}, we have shown four different cases of the fluctuations of $\lambda$ and $\mu$ in layers.
Based on work of~\citet{Ji}, these are possible values for mafic rocks (gabbro, diabase, mafic gneiss, and mafic granulitein) in Earth's crust and upper mantle.
Table~\ref{tab:tab3} reflects anisotropy parameters computed for the \citeauthor{Backus} medium for four consecutive cases from Table~\ref{tab:tab2}.
Both Tables should not be viewed in separation.

\renewcommand{\arraystretch}{1.4}
\begin{table}[h]
\centering
\begin{subtable}[b]{2.7cm}
\centering
\begin{tabular}{cc} 
\multicolumn{2}{c}{\footnotesize{[GPa]}}\\
\toprule 
\hphantom{x}$\lambda$\hphantom{x} & \hphantom{x}$\mu$\hphantom{x} \\ 
\toprule 
85 & 46.8 \\ 
40 & 47.1 \\ 
53 & 46.9 \\ 
80 & 47.0 \\ 
65 & 47.2 \\ 
\bottomrule
\end{tabular}
\caption{}
\label{subtab:a}
\end{subtable}
\,\,\,
\begin{subtable}[b]{2.7cm}
\centering
\begin{tabular}{cc} 
\multicolumn{2}{c}{\footnotesize{[GPa]}}\\ 
\toprule
$\lambda$ & $\mu$ \\ 
\toprule
61.8 & 46.8 \\ 
62.1 & 47.1 \\ 
61.9 & 46.9 \\ 
62.0 & 47.0 \\ 
62.2 & 47.2 \\ 
\bottomrule
\end{tabular}
\caption{}
\label{subtab:b}
\end{subtable}
\,\,\,
\begin{subtable}[b]{2.7cm}
\centering
\begin{tabular}{cc} 
\multicolumn{2}{c}{\footnotesize{[GPa]}}\\ 
\toprule
$\lambda$ & $\mu$ \\ 
\toprule
61.8 & 37 \\ 
62.1 & 56 \\ 
61.9 & 44 \\ 
62.0 & 35 \\ 
62.2 & 52 \\ 
\bottomrule
\end{tabular}
\caption{}
\label{subtab:c}
\end{subtable}
\,\,\,
\begin{subtable}[b]{2.7cm}
\centering
\begin{tabular}{cc} 
\multicolumn{2}{c}{\footnotesize{[GPa]}}\\ 
\toprule
$\lambda$ & $\mu$ \\ 
\toprule
45 & 37 \\ 
70 & 56 \\ 
86 & 44 \\ 
52 & 35 \\ 
64 & 52 \\ 
\bottomrule
\end{tabular}
\caption{}
\label{subtab:d}
\end{subtable}
\caption{\small{
\subref{subtab:a} strongly varying $\lambda$, barely varying $\mu$;
\subref{subtab:b} barely varying both $\lambda$ and $\mu$;
\subref{subtab:c} barely varying $\lambda$, strongly varying $\mu$;
\subref{subtab:d} strongly varying both $\lambda$ and $\mu$
}}
\label{tab:tab2}
\end{table}

\renewcommand{\arraystretch}{1.4}
\begin{table}[h]
\centering
\begin{subtable}[b]{2.7cm}
\centering
\begin{tabular}{cc} 
\multicolumn{2}{c}{\footnotesize{$\times10^{-5}$}}\\
\toprule
$\epsilon$ & -8.11\\ 
$\delta$ & -8.50 \\ 
$\gamma$ & 0.45 \\
$\varphi$ & -10.32 \\ 
\bottomrule
\end{tabular}
\caption{}
\label{subtab:a}
\end{subtable}
\,\,\,
\begin{subtable}[b]{2.7cm}
\centering
\begin{tabular}{cc} 
\multicolumn{2}{c}{\footnotesize{$\times10^{-5}$}}\\
\toprule
$\epsilon$ & 0.33\\ 
$\delta$ & -0.05 \\ 
$\gamma$ & 0.45 \\
$\varphi$ & 0.21 \\ 
\bottomrule
\end{tabular}
\caption{}
\label{subtab:b}
\end{subtable}
\,\,\,
\begin{subtable}[b]{2.7cm}
\centering
\begin{tabular}{cc} 
\multicolumn{2}{c}{\footnotesize{$\times10^{-3}$}}\\
\toprule
$\epsilon$ & 5.91\\ 
$\delta$ & -7.93 \\ 
$\gamma$ & 17.05 \\
$\varphi$ & 0.09 \\ 
\bottomrule
\end{tabular}
\caption{}
\label{subtab:c}
\end{subtable}
\,\,\,
\begin{subtable}[b]{2.7cm}
\centering
\begin{tabular}{cc} 
\multicolumn{2}{c}{\footnotesize{$\times10^{-3}$}}\\
\toprule
$\epsilon$ & 12.38\\ 
$\delta$ & -1.54 \\ 
$\gamma$ & 17.05 \\
$\varphi$ & 7.56 \\ 
\bottomrule
\end{tabular}
\caption{}
\label{subtab:d}
\end{subtable}
%}
\caption{\small{
\subref{subtab:a} Anisotropy parameters for the first;
\subref{subtab:b} second; 
\subref{subtab:c} third and;
\subref{subtab:d} the last case
}}
\label{tab:tab3}
\end{table}
The results from examples~\ref{subtab:c} and~\ref{subtab:d} show clearly that \citeauthor{Thomsen} parameter $\epsilon$ is not smaller for larger fluctuations of $\lambda$. 
Also, from~\ref{subtab:b} and~\ref{subtab:c}, we see that $\epsilon$ significantly depends on $\mu$.
Moreover, for large fluctuations of $\lambda$ from examples~\ref{subtab:a} and~\ref{subtab:d}, $\delta$ is not positive.
It might be positive for different examples, but it is not always the case.
The above considerations contradict the first quote of~\citeauthor{Berrymanetal}.
Since the first quote is not always true, it means that the conclusion in the second quote might be true, but only in certain cases.
In the example~\ref{subtab:d}, the fluctuation of $\lambda$ is high and $\epsilon$ is relatively large, while $\delta$ is negative, which is probably caused by the large variations of $\mu$.
From above examples, it is difficult to identify any pattern of relation between $\epsilon$ and $\delta$ indicating the variations of $\lambda$.

However, examining the examples from Tables~\ref{tab:tab2}~and~\ref{tab:tab3}, we find an alternative way of indicating the variations of $\lambda$ in layers, by omitting the parameter $\delta$.
In the case of barely varying $\mu$, hence, near--constant rigidity, the indicator of change of the fluid content might be the relation between $\epsilon$ and $\varphi$.
If there are no variations of $\lambda$, the absolute value of $\epsilon$ is larger than the one of $\varphi$, as it is exemplified in Table~\ref{subtab:b}.
If there are variations, the absolute value of $\epsilon$ is smaller than $\varphi$, as in Table~\ref{subtab:a}.
In the case of varying share modulus, it is enough to consider $\varphi$ solely.
For no fluctuations of $\lambda$, $\varphi\approx0$, as shown in Table~\ref{subtab:c}.
For varying $\lambda$, as in Table~\ref{subtab:d}, $\varphi$ is relatively large---almost 100 times larger than in the no fluctuation case.
Aforementioned relations are shown in Table~\ref{tab:four}.

\renewcommand{\arraystretch}{1.4}
\begin{table}[h]
\centering
\begin{tabular}{cccc} 
\toprule
\multicolumn{2}{c}{$\gamma\approx0$} & \multicolumn{2}{c}{\hphantom{xxxx}$\gamma>0$}\\
\midrule
$\lambda\approx\rm{const}$ \,\,\,&\,\,\, $\lambda\neq\rm{const}$\,\qquad &\,\qquad $\lambda\approx\rm{const}$ \,\,\,&\,\,\, $\lambda\neq\rm{const}$\\
$|\epsilon|>|\varphi|$ \,\,\,&\,\,\, $|\epsilon|<|\varphi|$\,\qquad &\,\qquad $\varphi\approx0$  \,\,\,&\,\,\, $\varphi\neq0$\\ 
\bottomrule
\end{tabular}
\caption{\small{Pattern for an alternative way of detecting fluids by excluding $\delta$}}
\label{tab:four}
\end{table}

The way of indicating change of fluids in layered Earth shown by~\citeauthor{Berrymanetal} requires further investigation.
The examples shown by us, in which the~\citeauthor{Berrymanetal} method is not applicable, do not indicate its uselessness, and the other examples---in which it might be applicable---should be examined.
Moreover, the case of the large fluctuation of $\lambda$, where $\delta$ is negative and $\epsilon$ positive or negative must be investigated more deeply.
The same regards the pattern for detecting fluid change using $\varphi$.
%%%%%%%%%%%%%%%%%%%%%%%%%%%%%%%%%%%%%%%%%%%%%%%%%%%%%%%%%%%%%%%%%%%%%%%%%%%%%%%%%%%%
%%%%%%%%%%%%%%%%%%%%%%%%%%%%%%%%%%%%%%%%%%%%%%%%%%%%%%%%%%%%%%%%%%%%%%%%%%%%%%%%%%%%
\section{Conclusions}
%%%%%%%%%%%%%%%%%%%%%%%%%%%%%%%%%%%%%%%%%%%%%%%%%%%%%%%%%%%%%%%%%%%%%%%%%%%%%%%%%%%%
%%%%%%%%%%%%%%%%%%%%%%%%%%%%%%%%%%%%%%%%%%%%%%%%%%%%%%%%%%%%%%%%%%%%%%%%%%%%%%%%%%%%
In the case of thin isotropic layers, its equivalent medium is isotropic if and only if $\mu=\rm{const}$.
For a PITL medium, the alternative derivation to the one of \citeauthor{Postma} is shown in Section~\ref{sec:pitl}.
For the general case of finely layered medium, we have confirmed that---under the assumption of strict stability conditions---conditions~(\ref{eq:baccond}) from the work of \citeauthor{Backus}, namely,
\begin{equation*}
\overline{c_{2323}}=\overline{\left(\frac{1}{c_{2323}}\right)}^{-1}\,,
\,\,\,
\overline{\left(\frac{\left(c_{2323}\right)^2}{c_{1111}}\right)}=\overline{\left(\frac{c_{2323}}{c_{1111}}\right)}\overline{c_{2323}}\,,
\,\,\,
\overline{\left(\frac{c_{2323}}{c_{1111}}\right)}=\overline{c_{2323}}\overline{\left(\frac{1}{c_{2323}}\right)}\,,
\end{equation*}
are sufficient for the equivalent medium to be isotropic.
An alternative way of checking the isotropy of equivalent medium is to examine if $c^{\overline{\rm TI}}_{1111} = c^{\overline{\rm TI}}_{3333}\,$ and
$c^{\overline{\rm TI}}_{1133} = c^{\overline{\rm TI}}_{1111} - 2\,c^{\overline{\rm TI}}_{1212}$, which means that isotropy conditions are satisfied, since such a medium with $c^{\overline{\rm TI}}_{1212} \neq c^{\overline{\rm TI}}_{2323}\,$ cannot exist.

Parameter, $\varphi$, showing the anisotropy of finely layered medium is introduced.
We have found its range and compared its behaviour---as the elasticity parameters vary---to the one of \citeauthor{Thomsen} parameters.
The dependance of $\epsilon$ on the values of $\gamma$---in contrary to the statement of~\citeauthor{Berrymanetal}---is noticed.
Also, we have exhibited certain cases of large variation of $\lambda$, in which $\epsilon$ has relatively large values and $\delta$ is negative.
Finally, we have shown the examples where the~\citeauthor{Berrymanetal} method of indicating fluids is not applicable.
The possible use of $\varphi$ as an indicator of change of fluid content in layered earth is demonstrated.
The application of $\varphi$ might be treated as an alternative anisotropy parameter to $\delta$, or as a supportive parameter containing additional physical information useful in liquid detection.
%%%%%%%%%%%%%%%%%%%%%%%%%%%%%%%%%%%%%%%%%%
\section*{Acknowledgements}
%%%%%%%%%%%%%%%%%%%%%%%%%%%%%%%%%%%%%%%%%%%
We wish to acknowledge discussions with the supervisor Michael A. Slawinski, the consultations with Len Bos, the computer support of Izabela Kudela and priceless comments along with the editorial work of Theodore Stanoev.
This research was performed in the context of The Geomechanics Project supported by Husky Energy. 
%%%%%%%%%%%%%%%%%%%%%%%%%%%%%%%%%%%%%%%%%%%%%%%%%%%%%%%%%%%%%%%%%%%%%%%%%%%%%%%%%%%%%%
%%%%%%%%%%%%%%%%%%%%%%%%%%%%%%%%%%%%%%%%%%%%%%%%%%%%%%%%%%%%%%%%%%%%%%%%%%%%%%%%%%%%%%
\bibliographystyle{apa}
\bibliography{bibliography}

\begin{thebibliography}{}

\bibitem[\protect\astroncite{Adamus et~al.}{2018}]{Adamusetal}
Adamus, F.~P., Slawinski, M.~A., and Stanoev, T. (2018).
\newblock On effects of inhomogeneity on anisotropy in {B}ackus average.
\newblock {\em arXiv:1802.04075 [physics.geo-ph]}.

\bibitem[\protect\astroncite{Backus}{1962}]{Backus}
Backus, G.~E. (1962).
\newblock Long-wave elastic anisotropy produced by horizontal layering.
\newblock {\em Journal of Geophysical Research}, 67(11).

\bibitem[\protect\astroncite{Berryman et~al.}{1999}]{Berrymanetal}
Berryman, J.~G., Grechka, V.~Y., and Berge, P.~A. (1999).
\newblock Analysis of {T}homsen parameters for finely layered {VTI }media.
\newblock {\em Geophysical Prospecting}, 47(6):959--978.

\bibitem[\protect\astroncite{Bos et~al.}{2017}]{Bosetal}
Bos, L., Dalton, D.~R., Slawinski, M.~A., and Stanoev, T. (2017).
\newblock On {B}ackus average for generally anisotropic layers.
\newblock {\em Journal of Elasticity}, 127:179--196.

\bibitem[\protect\astroncite{Gassmann}{1951}]{Gassmann}
Gassmann, F. (1951).
\newblock {\"U}ber die {E}lastizit{\"a}t por{\"o}ser {M}edien.
\newblock {\em Vierteljahrsschrift der {N}aturforschenden {G}esellschafy in
  {Z}urich}, 96:1--23.

\bibitem[\protect\astroncite{Helbig and Schoenberg}{1987}]{HelbigandSchoenberg}
Helbig, K. and Schoenberg, M. (1987).
\newblock Anomalous polarization of elastic waves in transversely isotropic
  media.
\newblock {\em The Journal of the Acoustical Society of America},
  81(5):1235--1245.

\bibitem[\protect\astroncite{Ji et~al.}{2010}]{Ji}
Ji, S., Sun, S., Wang, Q., and Marcotte, D. (2010).
\newblock Lam{\'e} parameters of common rocks in the {E}arth's crust and upper
  mantle.
\newblock {\em Journal of Geophysical Research}, 115:B06314.

\bibitem[\protect\astroncite{Postma}{1955}]{Postma}
Postma, G.~W. (1955).
\newblock Wave propagation in a stratified medium.
\newblock {\em Geophysics}, 20(4):780--806.

\bibitem[\protect\astroncite{Schoenberg and Muir}{1989}]{SchoenbergandMuir}
Schoenberg, M. and Muir, F. (1989).
\newblock A calculus for finely layered anisotropic media.
\newblock {\em Geophysics}, 54(5):581--589.

\bibitem[\protect\astroncite{Slawinski}{2015}]{SlawinskiRed}
Slawinski, M.~A. (2015).
\newblock {\em Waves and rays in elastic continua}.
\newblock World Scientific, 3 edition.

\bibitem[\protect\astroncite{Slawinski}{2018}]{SlawinskiGreen}
Slawinski, M.~A. (2018).
\newblock {\em Waves and rays in seismology: {A}nswers to unasked questions}.
\newblock World Scientific, 2 edition.

\bibitem[\protect\astroncite{Thomsen}{1986}]{Thomsen}
Thomsen, L. (1986).
\newblock Weak elastic anisotropy.
\newblock {\em Geophysics}, 51(10):1954--1966.

\end{thebibliography}
%%%%%%%%%%%%%%%%%%%%%%%%%%%%%%%%%%%%%%%%%%%%%%%%%%%%%%%%%%%%%%%%%%%%%%%%%%%%%%%%%%%%%%
%%%%%%%%%%%%%%%%%%%%%%%%%%%%%%%%%%%%%%%%%%%%%%%%%%%%%%%%%%%%%%%%%%%%%%%%%%%%%%%%%%%%%%
\end{document}